\newtheorem{lemma}{Lemma}
\newtheorem{corollary}{Corollary}
\newtheorem{theorem}{Theorem}
\newtheorem{remark}{Remark}
\title{On Extensions of Maximal Repeats in Compressed Strings}
\author{Julian Pape-Lange\thanks{Technische Universit\"at Chemnitz, Stra\ss e der Nationen 62, 09111 Chemnitz, Germany. Email: julian.pape-lange@informatik.tu-chemnitz.de}}
\date{}
\begin{document}
    
    \maketitle
    
    \begin{abstract}
        This paper provides an upper bound for several subsets of maximal repeats and maximal pairs in compressed strings and also presents a formerly unknown relationship between maximal pairs and the run-length Burrows-Wheeler transform.
        
        This relationship is used to obtain a different proof for the Burrows-Wheeler conjecture which has recently been proven by Kempa and Kociumaka in ``Resolution of the Burrows-Wheeler Transform Conjecture''.
        
        More formally, this paper proves that a string $S$ with $z$ LZ77-factors and without $q$-th powers has at most $73(\log_2 |S|)(z+2)^2$ runs in the run-length Burrows-Wheeler transform and the number of arcs in the compacted directed acyclic word graph of $S$ is bounded from above by $18q(1+\log_q |S|)(z+2)^2$.
    \end{abstract}
    
    \section{Introduction}
    
    A maximal repeat $P$ of a string $S$ is a substring of $S$ which occurs at least twice in $S$ and such that all extensions of $P$ occur less often in $S$. Raffinot proves in \cite{Raffinot:maxRepeatsVsCDAWGs} that there is a natural bijection from the internal nodes in a Compacted Directed Acyclic Word Graph (CDAWG) to the maximal repeats, which is given by the labels of the paths. Also, Furuya et al.\ present in \cite{Furuya_MR-RePair} a relation between maximal repeats and the grammar compression algorithm RePair, and they use this relation to design MR-RePair, an improved variant of RePair.
    
    Sometimes, maximal repeats are not sufficient, since they do not contain any information about the surrounding string. Therefore, in \cite{Belazzougui:lower_bound}, Belazzougui et al.\ introduce the number of right extensions of maximal repeats as a measure for the repetitiveness of strings. They further prove that the number of arcs in the CDAWG is equal to the number of right extensions of maximal repeats and that the number of runs in the run-length Burrows-Wheeler transform (RLBWT) is bounded from above by the number of right extensions of maximal repeats.
    
    In earlier work, I proved in \cite{Pape-Lange} that the number of maximal repeats in a string $S$ with $z$ (self-referential) LZ77-factors and without $q$-th powers is bounded from above by $3q(z+1)^3-2$ and that this upper bound is tight up to a constant factor. This result implies that for a string $S$ over an alphabet $\Sigma$, the number of arcs in the CDAWG, and thereby the number $r$ of runs in the RLBWT, is bounded from above by $3|\Sigma|q(z+1)^3$.
    
    We should expect that of all the $\mathcal{O}\left(qz^3\right)$ maximal repeats some provide less information than others. For example in the string 
    \[ba^{10}ba^{20}b\$=baaaaaaaaaabaaaaaaaaaaaaaaaaaaaab\$\textup{,}\]
    we can derive all maximal pairs of the maximal repeats of $a^i$ from the maximal pairs of $a^9$ and $a^{19}$. In this way, highly-periodic maximal repeats with exponent close to the exponent of the corresponding runs are more important than other maximal repeats which are powers of the same base.
    
    Blumer et al.\ have already shown in 1987 in \cite{Blumer:CDAWG} that the CDAWG cannot compress high powers and that the CDAWG of $a^n\$$ has size $\Theta\left(n\right)$. Contrary to the CDAWG, the RLBWT does not suffer from high powers and we should expect that many right extensions of maximal repeats do not increase the number of runs. And in fact, if the string is very structured, we expect that the output consists of few runs of single characters. For example Christodoulakis et al.\ show in \cite{FibBWT} that the Burrows-Wheeler transform of the $n$-th Fibonacci string $F_n$ is given by $b^{f_{n-2}} a^{f_{n-1}}$.
    
    Yet, until recently, it remained an open question whether there is an upper bound for the number of runs in the RLBWT which is polynomial in the number of LZ77-factors and the logarithm of the length of the string only. This Burrows-Wheeler transform conjecture was resolved in October 2019 by Kempa and Kociumaka who prove in their arXiv-article \cite{Kempa_BWT_Conjecture} that $r\in \mathcal{O}\left(z (\log n)^2\right)$ holds and promised that they will show $r\in \mathcal{O}\left(\delta \log \delta \max \left(\frac {n}{\delta \log \delta}\right)\right)$ for a complexity measure $\delta\leq z$ in an extended version.
    
    This paper provides a different approach to the Burrows-Wheeler transform conjecture and shows by using maximal repeats and their extensions that $r\leq 73(\log_2 |S|)(z+2)^2$ holds.
    
    On the way, this paper also shows that the number of arcs in the CDAWG is bounded from above by $18q(1+\log_q |S|)(z+2)^2$ and gives new insights into the combinatorial properties of extensions of maximal repeats which are either non-highly-periodic or cannot be extended by more than a period length.
    
    \section{Definitions}
    
    Let $\Sigma$ be an \emph{alphabet}.
    A \emph{string} with \emph{length} denoted by $|S|$ is the concatenation of \emph{characters} $S[1]S[2]\cdots S[|S|]$ of $\Sigma$. Since it will be useful to have a predecessor and a successor for every character of the string, we also define $S[0]=\$$ and $S[|S|+1]=\$$ with $\$\notin\Sigma$. The \emph{substring} $S[i..j]$ with $0\leq i\leq j\leq |S|+1$ is the concatenation $S[i]S[i+1]\cdots S[j]$. For $i>j$ the substring $S[i..j]$ is defined to be the empty string with length $0$. A \emph{prefix} is a substring of the form $S[1..j]$ and a \emph{suffix} is a substring of the form $S[i..|S|]$.
    
    The string $S$ is \emph{lexicographically strictly smaller/larger} than the string $S'$ if $S$ is lexicographically smaller/larger than $S'$ and there is a mismatch $S[m]\neq S'[m]$.
    
    A \emph{maximal pair} of $S$ is a triple $(n,m,l)\in \mathbb{N}^3$ with 
    $l\geq 1$ such that $S[n..n+l-1]$ is equal to $S[m..m+l-1]$ and this property can not be extended to any side. More formally: 
    \begin{itemize}
        \item $\forall i\in \mathbb{N} \textup{ with } 0\leq i< l: S[n+i] = S[m+i]$ but
        \item $S[n-1] \neq S[m-1]$ and
        \item $S[n+l] \neq S[m+l]$.
    \end{itemize}
    With this notation, the string $S[n..n+l-1]=S[m..m+l-1]$ is the \emph{corresponding maximal repeat}.
    
    Since for a maximal pair $(n,m,l)$ the inequality $S[n-1] \neq S[m-1]$ holds, the indices $n$ and $m$ cannot be equal. Also, by construction, $S[n..n+l-1]$ and $S[m..m+l-1]$ are contained in $S$ and $S[n..n+l]$ and $S[m..m+l]$ are contained in $S\$$.
    
    Two maximal pairs $(n,m,l)$ and $(n',m',l')$ are \emph{copies of each other} if the two strings $S[n-1..n+l]$ and $S[m-1..m+l]$ are equal to the two strings $S[n'-1..n'+l']$ and $S[m'-1..m'+l']$. In particular, the two maximal pairs $(n,m,l)$ and $(m,n,l)$ are always copies of each other. However, it is not sufficient for two maximal pairs to have identical corresponding maximal repeats in order to be copies of each other.
    
    If two maximal pairs are not copies of each other, they are \emph{substantially different}.
    
    A \emph{period} of a string $S$ is an integer $\Delta$ such that all characters in $S$ with distance $\Delta$ are equal.
    
    Let $S[l..r]$ be a positioned $\Delta$-periodic substring with $|S[l..r]|\geq \Delta$. The \emph{maximal $\Delta$-periodic extension} of this occurrence is $(l',r')$ such that 
    \begin{itemize}
        \item $l'\leq l\leq r \leq r'$,
        \item $S[l'..r']$ is $\Delta$-periodic,
        \item $S[l'-1..r']$ is not $\Delta$-periodic and
        \item $S[l'..r'+1]$ is not $\Delta$-periodic.
    \end{itemize}
    With this notation, the pair $(l'-1,r'+1)$ is the \emph{padded maximal $\Delta$-periodic extension}.
    
    If $\Delta$ is the minimal period length of $S[l..r]$, we will omit the $\Delta$ and simply write \emph{maximal periodic extension} and \emph{padded maximal periodic extension}.
    
    Similar to maximal pairs, two padded maximal periodic extensions $(l-1,r+1)$ and $(l'-1,r'+1)$ are \emph{copies of each other} if the corresponding strings are equal. If the two padded maximal periodic extensions are not copies of each other, they are \emph{substantially different}.
    
    The (self-referential) \emph{LZ77-decomposition} of a string $S$ is a factorization $S=F_1 F_2 \dots F_n$ in LZ77-factors, such that for all $i\in \{1,2,\dots,n\}$ either
    \begin{itemize}
        \item $F_i$ is a character which does not occur in $F_1 F_2 \dots F_{i-1}$ or
        \item $F_i$ is a the longest possible prefix of $S[|F_1 F_2 \dots F_{i-1}|+1..|S|]$ which occurs at least twice in $F_1 F_2 \dots F_i$.
    \end{itemize}
    
    Let $\pi_i\in \{0,1,2,\dots, |S|\}$ be given by the lexicographic order of the acyclic permutations $S[{\pi_i}+1..|S|+1]S[1..\pi_i]$ of $S\$$. The \emph{Burrows-Wheeler transform} defined in \cite{BWT} is given by the last characters of those strings, and, since $S[0]=\$=S[|S|+1]$ hold by definition, these characters are given by $S[\pi_i]$.
    
    \section{Non-Highly-Periodic Maximal Pairs}
    
    The main goal of this section is to prove that in a string $S$ with $z$ LZ77-factors and without $q$-th powers the number of substantially different maximal pairs whose corresponding maximal repeats are not at least a sixth power is bounded from above by $41(\log_2 |S|)(z+1)(z+2)$. Along the way, it will also be shown that the CDAWG has at most $18q(1+log_q |S|)(z+2)^2$ arcs.
    
    In Theorem 8 of \cite{Pape-Lange}, I counted the number of maximal pairs around the boundaries of LZ77-factors which neither begin nor end with a power of a given exponent:
    
    \begin{theorem}[Theorem 8 of \cite{Pape-Lange}]\label{thm:specialUpperBound} 
        Let $S$ be a string.
        Let $F_1 F_2\dots F_z F_{z+1}=S\$$ be the LZ77-decomposition of $S\$$.
        Let $s_1, s_2,\dots , s_z, s_{z+1}$ be the starting indices of the LZ77-factors in $S\$$.
        Let $q\in \mathbb{N}_{\geq 2}$ and $i,j\in \{1,2,\dots, z, z+1\}$ be natural numbers.\\
        Then the number of different maximal pairs $(n_k, m_k, l_k)$ such that for all $k$
        \begin{itemize}
            \item the substring $S[n_k..s_i-1]$ is not a fractional power with exponent greater than or equal to $q$,
            \item the substring $S[s_i..n_k+l_k-1]$ is not a fractional power with exponent greater than or equal to $q$,
            \item the starting index $s_i$ is contained in the interval $[n_k, n_k+l_k]$,
            \item the starting index $s_{i+1}$ is not contained in the interval $[n_k, n_k+l_k]$ and
            \item the starting index $s_j$ is contained in the interval $[m_k, m_k+l_k]$
        \end{itemize}
        is bounded from above by $18q\cdot\lceil \log_{q}(|F_1 F_2\dots F_i|)\rceil$
    \end{theorem}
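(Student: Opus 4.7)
The plan is to combine a dyadic decomposition on the scale of the left extension $s_i - n_k$ with a per-scale combinatorial bound of $18q$, both driven by the non-$q$-th-power hypotheses on the two halves of the pair.

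First, I would partition the admissible pairs by the integer $t = \lfloor \log_q(s_i - n_k) \rfloor$, which ranges over at most $\lceil \log_q |F_1 F_2 \cdots F_i| \rceil$ values; within class $t$ the left extension satisfies $q^t \leq s_i - n_k < q^{t+1}$. The conditions $s_i \in [n_k, n_k + l_k]$ and $s_{i+1} \notin [n_k, n_k + l_k]$ force $n_k + l_k \in [s_i, s_{i+1} - 1]$, so the right extension is already constrained and does not need its own scale parameter. Alternatively one could have run the dyadic argument on the right extension or the total length, but splitting on the left side pairs naturally with the factor boundary at $s_i$.

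Second, within each scale class I would show there are at most $18q$ pairs. The key lever is the Fine and Wilf theorem: if two pairs in the same class had sufficiently close copy positions $m_k$ relative to the common alignment at $s_j$, the two occurrences of $S[n_k..n_k + l_k - 1]$ would overlap enough to force a period shorter than $(s_i - n_k)/q$ on the combined substring, and this period would then carry over to either $S[n_k..s_i - 1]$ or $S[s_i..n_k + l_k - 1]$, violating the non-$q$-th-power hypothesis on that half. The LZ77 condition $s_j \in [m_k, m_k + l_k]$, combined with the defining property that $F_j$ is the longest prefix already occurring earlier, further restricts how the copy positions can cluster; together these constraints limit each scale class to a constant multiple of $q$ pairs, with the factor $18$ absorbing constant-many subcases (side of $s_i$ that carries the periodicity, direction of the would-be $q$-th power, and alignment of $m_k$ modulo the period).

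Main obstacle: the hardest part is the per-scale bound of $18q$. The pair is split asymmetrically at $s_i$ into two halves that must each independently avoid being a $q$-th power, while simultaneously the pair as a whole must satisfy maximality at both ends and span the factor boundary $s_j$ on the copy side. Turning multiplicity at a fixed scale into a forbidden $q$-th power requires identifying precisely which overlap of two copies creates the short period and tracing that period across $s_i$ into the half that gets blown up; the interplay between Fine and Wilf, the LZ77-factor uniqueness of $F_j$, and the maximality conditions on both ends is the technical heart of the argument, and tight calibration of the constant $18$ requires explicit case analysis rather than a single clean inequality.
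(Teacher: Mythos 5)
This theorem is imported from \cite{Pape-Lange}; the present paper quotes it without reproving it, so your sketch can only be measured against the cited argument and against internal consistency. Your high-level skeleton is the right one: the $\lceil\log_q\rceil$ factor does come from a base-$q$ scale classification of the anchored occurrence, and the $18q$ from a per-class pigeonhole-plus-periodicity argument. However, one ingredient you invoke is explicitly disallowed: you use ``the defining property that $F_j$ is the longest prefix already occurring earlier'' to restrict how the copy positions $m_k$ cluster. Immediately after stating the theorem, the paper notes that the underlying LZ77-structure ``is not used in the proof'' --- only the positions $s_i$, $s_{i+1}$, $s_j$ enter, as anchors --- and this is precisely what licenses the Corollary replacing them by arbitrary indices. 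An argument that genuinely needs the greedy property of $F_j$ would not survive that generalization, so this lever must be removed.

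More seriously, the per-class bound of $18q$ --- which you yourself identify as the technical heart --- is not established, and the one concrete reduction you offer for it does not hold. The observation that $n_k+l_k\in[s_i,s_{i+1}-1]$ only confines the right extension to a window of size $|F_i|$, which can be of order $|S|$; within a single left-scale class there can a priori be many distinct right lengths $b_k$ (many maximal repeats starting at the same position $s_i-a_k$ and extending one another to the right), each contributing its own copy positions, so the right side cannot simply be declared ``already constrained.'' Likewise, your Fine--Wilf step speaks of two occurrences of ``$S[n_k..n_k+l_k-1]$'' overlapping, but two distinct pairs in the same class generally have different corresponding maximal repeats of different lengths; one must first identify the common substring to which the overlap argument applies, verify that the induced period is at most a $1/q$ fraction of the correct half (the one straddled by $s_i$ on the appropriate side), and reconcile this with maximality at both ends of both pairs. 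As written, the proposal has the correct shape but does not contain the argument that makes the per-class count finite, let alone $18q$.
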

    
    This can be slightly simplified by ignoring the underlying LZ77-structure which is not used in the proof:
    
    \begin{corollary}
        Let $S$ be a string. Let $q\in \mathbb{N}_{\geq 2}$ be a natural number and $i, j$ be indices of two characters in $S\$$.
        
        Then there are at most $18q\cdot\lceil \log_{q}(|S\$|)\rceil$ different maximal pairs $(n_k, m_k, l_k)$ such that for all~$k$
        \begin{itemize}
            \item the substring $S[n_k..i-1]$ is not a fractional power with exponent greater than or equal to $q$,
            \item the substring $S[i..n_k+l_k-1]$ is not a fractional power with exponent greater than or equal to $q$,
            \item the index $i$ is contained in the interval $[n_k, n_k+l_k]$ and
            \item the index $j$ is contained in the interval $[m_k, m_k+l_k]$.
        \end{itemize}
    \end{corollary}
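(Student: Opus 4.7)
The plan is to deduce the corollary directly from Theorem~\ref{thm:specialUpperBound} by a close inspection of which hypotheses of that theorem are actually invoked in its proof. As the author remarks in the paragraph just above the corollary, the LZ77-structure is not used in the proof: the argument treats $s_i$ and $s_j$ as nothing more than two fixed positions in $S\$$, and the non-fractional-power conditions are stated purely in terms of these positions. Replacing $s_i$ by an arbitrary index $i$ and $s_j$ by an arbitrary index $j$ therefore yields a valid statement proved by the same argument.

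Two discrepancies between Theorem~\ref{thm:specialUpperBound} and the corollary remain to be addressed. First, the clause ``$s_{i+1}$ is not contained in $[n_k, n_k+l_k]$'' has been dropped, so the corollary a priori counts a larger family of maximal pairs. However, this clause is used in \cite{Pape-Lange} only as a convenient way to slice the counted pairs by the LZ77-factor in which their right endpoint lies; the logarithmic bound produced by the proof already bounds the enlarged family. Second, since $|F_1 F_2\cdots F_i|\leq |S\$|$, the bound $18q\lceil\log_{q}(|F_1 F_2\cdots F_i|)\rceil$ from Theorem~\ref{thm:specialUpperBound} is majorised by $18q\lceil\log_{q}(|S\$|)\rceil$ by monotonicity of the logarithm, which is exactly the bound claimed in the corollary.

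The only real obstacle is the need to verify by direct inspection of the earlier paper that the proof of Theorem~\ref{thm:specialUpperBound} indeed makes no use of the LZ77-decomposition and of the $s_{i+1}$-clause. Granted this inspection, the corollary follows by the substitutions $s_i\mapsto i$, $s_j\mapsto j$ and the trivial monotonicity argument above, with no additional combinatorial work required.
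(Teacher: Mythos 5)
Your proposal matches the paper's own justification, which is nothing more than the remark preceding the corollary that the LZ77-structure (including the $s_{i+1}$-clause) is not actually used in the proof of Theorem~8 of \cite{Pape-Lange}, so that $s_i$, $s_j$ may be replaced by arbitrary indices and the bound relaxed to $18q\lceil\log_q(|S\$|)\rceil$ by monotonicity. The paper offers no further proof, so your argument is essentially identical to the intended one.
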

    
    Following the proof of Theorem 8 in \cite{Pape-Lange}, the interval $S[i..n_k+l_k-1]$ naturally splits into $S[n_k..i-1]$ and $S[i..n_k+l_k-1]$ and we can even require that the longer part(s) is/are not a high power(s):
    
    \begin{lemma} \label{lem:nonhighly}
        Let $S$ be a string. Let $q\in \mathbb{N}_{\geq 2}$ be a natural number and $i, j$ be indices of two characters in $S\$$.
        
        Then there are at most $18q (1 + \log_{q} |S|)$ different maximal pairs $(n_k, m_k, l_k)$ such that for all~$k$
        \begin{itemize}
            \item 
            \begin{itemize}
                \item if $|S[n_k..i-1]|\geq |S[i..n_k+l_k-1]|$, then $S[n_k..i-1]$ is not a fractional power with exponent greater than or equal to $q$,
                \item if $|S[n_k..i-1]|< |S[i..n_k+l_k-1]|$, then $S[i..n_k+l_k-1]$ is not a fractional power with exponent greater than or equal to $q$,
            \end{itemize}
            \item the index $i$ is contained in the interval $[n_k, n_k+l_k]$ and
            \item the index $j$ is contained in the interval $[m_k, m_k+l_k]$.
        \end{itemize}
    \end{lemma}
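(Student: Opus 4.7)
The Lemma relaxes the preceding Corollary by requiring non-high-powerness only on the longer of the two halves $S[n_k..i-1]$, $S[i..n_k+l_k-1]$ rather than on both. My plan is to trace through the proof of Theorem~\ref{thm:specialUpperBound} of \cite{Pape-Lange} that underlies the Corollary and to observe that its argument actually uses the non-high-power assumption only on the half whose length drives the dyadic partition of the pairs.

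In that proof, each maximal pair crossing the anchor $i$ is split at $i$ into its left half $S[n_k..i-1]$ and right half $S[i..n_k+l_k-1]$. The pairs are then grouped into $\mathcal{O}(\log_q|S|)$ length classes according to the length of one of these halves, and within each class a Fine--and--Wilf style periodicity argument, which crucially uses the non-high-powerness of the \emph{classifier} half, shows that at most $18q$ pairs can coexist. The plan is to run exactly the same argument, but, for each pair, to classify by the length of whichever half is the longer one (with a fixed tie-breaking rule). Our hypothesis supplies the non-high-power property exactly on this classifier half, so the per-class bound of $18q$ carries over. Since the classifier length lies in $[1,|S|]$, there are at most $1+\log_q|S|$ length classes, giving the claimed bound $18q(1+\log_q|S|)$.

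The main obstacle is verifying, step by step in the proof of Theorem~\ref{thm:specialUpperBound}, that no intermediate periodicity argument silently requires the non-high-powerness of the shorter half as well; concretely, one needs to check that at each point where a period is extracted, it is the period of the half whose length class is currently being considered, and that the other half enters only as an unrestricted extension. A minor secondary point is to check that the choice of which half to classify by does not cause a pair to be placed into two different classes; this is automatic once we fix, once and for all, a deterministic tie-breaking rule (say, classify by the left half whenever the two halves have equal length).
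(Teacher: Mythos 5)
Your proposal matches the paper's approach exactly: the paper offers no written proof of this lemma beyond the one-sentence remark that, following the proof of Theorem~8 of \cite{Pape-Lange}, the maximal repeat splits at the index $i$ into the two parts and the argument only ever needs the non-high-powerness of the longer part(s). Your plan --- classifying each pair by the length of its longer half into $1+\log_q|S|$ dyadic (base-$q$) classes and checking that the per-class bound of $18q$ uses periodicity only of that classifier half --- is precisely the verification the paper leaves implicit, and you correctly flag the one point that actually needs checking.
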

    
    As proven in Lemma 4 of \cite{Pape-Lange}, each maximal pair has a copy such that both double-sided extensions of the corresponding maximal repeats cross LZ77-boundaries. Also, each maximal pair introduces at most two new right extensions of maximal repeats. Therefore, we can deduce a bound similar to Theorem 1 of \cite{Pape-Lange} for the right extensions of maximal repeats and the arcs of the CDAWG:
    
    \begin{theorem}
        Let $S$ be a string. Let $z$ be the number of LZ77-factors of $S$. Let $q$ be a natural number such that $S$ does not contain $q$-th powers.
        
        Then the number of right extensions of maximal repeats in $S$ is bounded from above by $18q(1+log_q |S|)(z+2)^2 - (z+1)$. Also, the CDAWG of $S$ has at most $18q(1+log_q |S|)(z+2)^2$ arcs.
    \end{theorem}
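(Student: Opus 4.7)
The plan is to combine Lemma~\ref{lem:nonhighly} with two auxiliary facts from \cite{Pape-Lange}: Lemma~4 of that paper, which guarantees that every maximal pair has a copy whose two double-sided extensions both cross LZ77-boundaries, and the observation that each substantially different maximal pair introduces at most two right extensions of its corresponding maximal repeat.

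First I would use the hypothesis that $S$ contains no $q$-th power to deactivate the periodicity conditions in Lemma~\ref{lem:nonhighly}: no substring of $S$ can be a fractional power with exponent at least $q$, because such a substring would contain a $q$-th power as a prefix. Consequently, Lemma~\ref{lem:nonhighly} applies unconditionally to every maximal pair of $S$ once a pair of indices $(i,j)$ is fixed, giving at most $18q(1+\log_q |S|)$ maximal pairs per such pair of indices.

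Next I would assign, to every substantially different maximal pair, a canonical copy $(n,m,l)$ together with LZ77-factor starting indices $s_i\in[n,n+l]$ and $s_j\in[m,m+l]$ supplied by Lemma~4 of \cite{Pape-Lange}. Since $S\$$ has $z+1$ LZ77-factors, there are at most $z+2$ candidate positions for each of $s_i,s_j$ (allowing the sentinel at the right end), so the number of admissible ordered index pairs is at most $(z+2)^2$. Summing the bound from Lemma~\ref{lem:nonhighly} over these index pairs gives at most $18q(1+\log_q |S|)(z+2)^2$ substantially different maximal pairs. The final step is translating this into the two stated bounds: the total number of right extensions of maximal repeats is obtained by the observation that each maximal pair contributes at most two of them, and the CDAWG arcs are the right extensions of the non-empty maximal repeats together with the outgoing edges of the source node, which are indexed by the at most $z+1$ distinct characters of $S\$$. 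This $z+1$ gap between the two counts explains the asymmetric $-(z+1)$ correction.

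The main obstacle will be matching the constants exactly. In particular, the canonical-copy assignment must be made in a way that charges every substantially different maximal pair to exactly one index pair $(i,j)$, avoiding both over- and under-counting; the sentinel indices at position $1$ and at $|S|+2$ must be included to justify the factor $(z+2)^2$ rather than $(z+1)^2$; and the $-(z+1)$ offset on the right-extension bound must be tracked precisely so that, after adding back the at most $z+1$ arcs from the CDAWG source node, the final arc count comes out to exactly $18q(1+\log_q |S|)(z+2)^2$.
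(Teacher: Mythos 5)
Your overall strategy matches the paper's: use the no-$q$-th-power hypothesis to make the power conditions of Lemma~\ref{lem:nonhighly} vacuous, charge each substantially different maximal pair to a boundary-crossing copy via Lemma~4 of \cite{Pape-Lange}, sum the lemma's bound over index pairs, and convert maximal pairs to right extensions with a factor of $2$. However, your bookkeeping does not produce the stated constant, and this is exactly the obstacle you flag but do not resolve. You sum over \emph{ordered} pairs drawn from $z+2$ candidate positions, obtaining $18q(1+\log_q|S|)(z+2)^2$ substantially different maximal pairs; multiplying by $2$ for the right extensions then yields $36q(1+\log_q|S|)(z+2)^2$, twice the claimed bound, before the $-(z+1)$ term is even addressed.

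The missing idea is the symmetry of maximal pairs: $(n,m,l)$ and $(m,n,l)$ are by definition copies of each other, so every substantially different maximal pair has a representative whose first occurrence crosses a factor boundary of index $i$ at most the index $j$ crossed by the second occurrence. One therefore sums only over the $\binom{z+2}{2}=(z+1)(z+2)/2$ unordered pairs $i\leq j$ of the $z+1$ starting indices of the LZ77-factors of $S\$$ --- the sentinel $\$$ is itself the last factor, so no additional $(z+2)$-nd position is needed; indeed, with $z+2$ positions even the unordered count $(z+2)(z+3)/2$ would overshoot after doubling. This gives at most $9q(1+\log_q|S|)(z+1)(z+2)$ substantially different maximal pairs, hence at most $18q(1+\log_q|S|)(z+1)(z+2)\leq 18q(1+\log_q|S|)(z+2)^2-(z+1)$ right extensions; the factor $(z+2)^2$ in the statement is thus merely an upper estimate of $(z+1)(z+2)$ whose slack absorbs the $-(z+1)$, not a count of $z+2$ boundary positions. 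Adding the at most $|\Sigma\cup\{\$\}|\leq z+1$ arcs leaving the root --- which you identify correctly --- then recovers the arc bound exactly.
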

    \begin{proof}
        Summing up over the first indices $i\leq j$ of the $z+1$ LZ77-factors of $S\$$ yields that there are at most
        \[\sum_{i=1}^{z+1}\sum_{j=i}^{z+1} 18q(1+\log_q |S|) 
        = 9q(1+log_q |S|)(z+1)(z+2) \leq 9q(1+log_q |S|)(z+2)^2 - (z+1)\]
        substantially different maximal pairs. Hence, there are at most $18q(1+log_q |S|)(z+2)^2 - (z+1)$ different right extensions of maximal repeats.
        
        Since the number of right extensions of (non-empty) maximal repeats is equal to the number of arcs in the CDAWG which start at internal nodes and since there are at most $|\Sigma\cup\{\$\}|\leq z+1$ arcs starting at the root, there are at most $18q(1+log_q |S|)(z+2)^2$ arcs in the CDAWG.
    \end{proof}
    
    Additionally, there might be maximal pairs, in which the longer part(s) is/are high power(s) but the corresponding periodicity does not extend to the whole maximal repeat. In order to find a good upper bound for those maximal pairs, we need an additional lemma to limit the number of possible period lengths of prefixes and suffixes with high powers.
    
    \begin{lemma}\label{lem:twoCubes}
        Let $S$ be a string.
        Let further $P_1$, $P_2$ be two substrings of $S$ such that
        \begin{itemize}
            \item either $P_1$ and $P_2$ are both prefixes of $S$ or $P_1$ and $P_2$ are both suffixes of $S$,
            \item the length of $P_2$ fulfills the inequality $|P_1| \leq |P_2|\leq 2|P_1|$ and
            \item both $P_1$ and $P_2$ are fractional powers with exponent greater than or equal to $3$.
        \end{itemize}
        Then $P_1$ and $P_2$ have the same minimal period length.
    \end{lemma}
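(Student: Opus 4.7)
The plan is to reduce to the prefix case (the suffix case follows by symmetry, since reversing $S$ turns suffixes into prefixes while preserving lengths and minimal periods) and then to apply the Fine--Wilf periodicity lemma to $P_1$ before lifting the resulting period information back up to $P_2$. Let $\Delta_1$ and $\Delta_2$ denote the minimal periods of $P_1$ and $P_2$ respectively. The exponent-$\geq 3$ assumption yields $\Delta_1 \leq |P_1|/3$ and $\Delta_2 \leq |P_2|/3 \leq 2|P_1|/3$, so $\Delta_2$ is already a period of $P_1$ and moreover $\Delta_1 + \Delta_2 \leq |P_1|$. Fine and Wilf then forces $\gcd(\Delta_1,\Delta_2)$ to be a period of $P_1$, and the minimality of $\Delta_1$ gives $\Delta_1 \mid \Delta_2$.

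To finish, I would show that $\Delta_1$ is actually a period of all of $P_2$; together with $\Delta_1 \mid \Delta_2$ and the minimality of $\Delta_2$, this forces $\Delta_1 = \Delta_2$. Concretely, I would fix an arbitrary $i \in [1,|P_2|-\Delta_1]$, set $i' = ((i-1) \bmod \Delta_2) + 1 \in [1,\Delta_2]$, and use the inequality $\Delta_1 + \Delta_2 \leq |P_1|$ to observe that both $i'$ and $i'+\Delta_1$ lie inside $P_1$. Then the chain
\[
P_2[i] \;=\; P_2[i'] \;=\; P_1[i'] \;=\; P_1[i'+\Delta_1] \;=\; P_2[i'+\Delta_1] \;=\; P_2[i+\Delta_1]
\]
closes the argument: the outer two equalities use the $\Delta_2$-periodicity of $P_2$ (combined with $i+\Delta_1 \leq |P_2|$), the equalities with $P_1$ use that $P_1$ is a prefix of $P_2$, and the middle equality uses the $\Delta_1$-periodicity of $P_1$.

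I do not expect a genuine conceptual obstacle here; the main thing to watch is that the exponent bound of $3$ (rather than $2$) is essential for the Fine--Wilf step, since with exponent only $2$ the sum $\Delta_1+\Delta_2$ could be as large as roughly $3|P_1|/2$ and the periodicity lemma would fail to apply to $P_1$. Bookkeeping around the position $i'$ (especially in the edge case $i'+\Delta_1 > \Delta_2$) is the only mildly delicate aspect, but the inequality $\Delta_1 + \Delta_2 \leq |P_1| \leq |P_2|$ keeps every index we invoke within bounds.
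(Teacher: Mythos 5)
Your argument is correct and is essentially the paper's: both proofs hinge on the inequality $\Delta_1+\Delta_2\leq |P_1|$ (available only because the exponent is at least $3$) and then transfer the $\Delta_1$-periodicity of $P_1$ to all of $P_2$ by reducing an arbitrary position of $P_2$ modulo $\Delta_2$ into the range covered by $P_1$. The Fine--Wilf detour giving $\Delta_1\mid\Delta_2$ is sound but redundant, since $\Delta_1\leq\Delta_2$ already follows from $\Delta_2$ being a period of the prefix $P_1$, which is all your final step needs.
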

    \begin{proof}
        Without loss of generality assume that $P_1$ and $P_2$ are both prefixes of $S$. Let $\Delta_1$ and $\Delta_2$ be the minimal period lengths of $P_1$ and $P_2$, respectively.
        
        On the one hand, $P_1$ is a substring of $P_2$. Therefore $P_1$ is $\Delta_2$-periodic. This implies $\Delta_1\leq \Delta_2$.
        
        On the other hand, the inequalities $\Delta_1 \leq \frac{1}{3} |P_1|$ and $\Delta_2\leq \frac{1}{3} |P_2|\leq \frac{2}{3} |P_1|$ hold. Therefore, for all indices $i$ with $1\leq i \leq \Delta_2$, the inequality $i+\Delta_1\leq |P_1|$ holds. Since $P_1$ is $\Delta_1$-periodic, the equation 
        \[S[\left((i) \mod \Delta_2\right) +1] = S[\left((i+\Delta_1) \mod \Delta_2\right) +1]\]
        holds as well. This implies that $P_2$ is $\Delta_1$-periodic. This, however, implies $\Delta_2\leq \Delta_1$ and thereby concludes the proof.
    \end{proof}
    
    \begin{theorem} \label{thm:nonhighly-nonextended}
        Let $S$ be a string. Let $i,j$ be indices of two characters in $S\$$.
        
        Then there are at most $12\log_{2} |S|$ different maximal pairs $(n_k, m_k, l_k)$ such that for all $k$
        \begin{itemize}
            \item 
            \begin{itemize}
                \item if $|S[n_k..i-1]|\geq |S[i..n_k+l_k-1]|$, then $S[n_k..i-1]$ is a fractional power with exponent greater than or equal to $3$ and period length $\Delta$,
                \item if $|S[n_k..i-1]|< |S[i..n_k+l_k-1]|$, then $S[i..n_k+l_k-1]$ is a fractional power with exponent greater than or equal to $3$ and period length $\Delta$,
            \end{itemize}
            \item the substring $S[n_k..n_k+l_k-1]$ is not $\Delta$ periodic,
            \item the index $i$ is contained in the interval $[n_k, n_k+l_k]$ and
            \item the index $j$ is contained in the interval $[m_k, m_k+l_k]$.
        \end{itemize}
    \end{theorem}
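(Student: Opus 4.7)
By symmetry between the subcase where the left side $S[n_k..i-1]$ is longer and the subcase where $S[i..n_k+l_k-1]$ is longer, it suffices to prove a bound of $6\log_2|S|$ in the left-longer subcase. I would first partition the maximal pairs in this subcase by the dyadic range containing $L_k := i-n_k$. For two pairs $k,k'$ in a common range $[2^t,2^{t+1})$, the corresponding left sides $S[n_k..i-1]$ and $S[n_{k'}..i-1]$ are both suffixes of $S[1..i-1]$ of comparable lengths and both fractional cubes, so Lemma~\ref{lem:twoCubes} forces them to share a common minimal period $\Delta$. Hence there are at most $\lceil\log_2|S|\rceil$ relevant scales, and within each scale all pairs have a single common $\Delta$.

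Fix such a scale and its $\Delta$, and let $\rho=[l',r']$ denote the maximal $\Delta$-periodic extension containing position $i-1$, so that every eligible left side is a suffix of $\rho$ ending at $i-1$. The hypothesis that the whole repeat is not $\Delta$-periodic forces $n_k+l_k-1\geq r'+1$, so the break character $S[r'+1]$ (satisfying $S[r'+1]\neq S[r'+1-\Delta]$) sits strictly inside the $n_k$-occurrence. Using $L_k\geq 3\Delta$, both positions $r'+1$ and $r'+1-\Delta$ lie inside the occurrence, and the equality $S[n_k+c]=S[m_k+c]$ transports the inequality to position $m_k+(r'+1-n_k)$. A short comparison then yields that the maximal $\Delta$-periodic extension $\sigma_k$ of the $m_k$-occurrence ends exactly at $m_k+r'-n_k$; hence each maximal pair is tagged by a nonzero shift $d_k := m_k-n_k$ and by a distinct period-$\Delta$ run $\sigma_k$ with right endpoint $d_k+r'$. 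A brief overlap argument, using the maximality condition $S[n_k-1]\neq S[m_k-1]$ and that both occurrences of the repeat contain $i$, further rules out two maximal pairs in the same scale sharing a common shift, so the count reduces to bounding the number of admissible shifts per scale.

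The remaining task, which I expect to be the main obstacle, is to prove that at most six shifts are admissible per scale. Every admissible $\sigma_k$ is a period-$\Delta$ run whose last $\Delta$ letters agree in phase with those of $\rho$, whose immediate successor character is the forced break letter $S[r'+1]$, and whose left end is further restricted by the maximality condition at $n_k-1$ and by the containment of $j$ in $[m_k,m_k+l_k]$. A Fine--Wilf-style argument, exploiting these forced right-end and left-end profiles together with the impossibility of two such runs mutually absorbing one another without violating the maximality conditions, should cap the number of admissible shifts at six. Summing six across the $\lceil\log_2|S|\rceil$ scales and doubling for the symmetric right-longer subcase yields the claimed bound $12\log_2|S|$.
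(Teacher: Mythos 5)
Your reduction by symmetry, the dyadic partition of $L_k=i-n_k$, and the use of Lemma~\ref{lem:twoCubes} to force a common minimal period $\Delta$ within each scale all match the paper's argument (the paper phrases it as a pigeonhole: $\lfloor 6\log_2|S|\rfloor+1$ pairs over at most $\lfloor\log_2|S|\rfloor$ scales force $7$ into one scale, so the whole theorem reduces to showing at most $6$ per scale). But the step you yourself flag as ``the main obstacle'' --- that at most six shifts are admissible per scale --- is precisely the content of the theorem, and your proposal does not prove it; ``a Fine--Wilf-style argument \dots should cap the number of admissible shifts at six'' is a hope, not an argument. Moreover, the tagging you set up (each pair tagged by a distinct period-$\Delta$ run $\sigma_k$ with right endpoint $d_k+r'$) gives no constant bound by itself: a string can contain many disjoint maximal $\Delta$-periodic runs, so without a further quantitative constraint the number of admissible shifts per scale is unbounded.

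The missing ingredient is a second pigeonhole on the positions $m_k$, and it requires an observation you never exploit: since $i$ lies in the interval $[n_k,n_k+l_k]$ and the left side is at least as long as the right side, $i$ lies in the right half of the $n$-occurrence, whence $l_k\leq 2(i-n_k)\leq 4L$; combined with $j\in[m_k,m_k+l_k]$ this confines every $m_k$ to the window $[j-4L,\,j]$ of length $4L$. Seven pairs in one scale then yield two with $|m_a-m_b|\leq\frac{2}{3}L$, and a short case analysis ($\Delta\geq L/3$ versus $\Delta\leq L/3$) shows the $\Delta$-periodic parts of the two $m$-occurrences overlap by at least $\Delta$, so they merge into a single $\Delta$-periodic run; this forces both periodic parts to end with the same character, and from there either one pair fails maximality (same preceding character on both sides, supplied by the periodic extension) or the two pairs coincide. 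Without the bound $l_k\leq 4L$ and the resulting localization of the $m_k$, your left-end restrictions (maximality at $n_k-1$, containment of $j$) do not by themselves cap the count, so the proposal as written has a genuine gap at its decisive step.
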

    \begin{proof} By contradiction:
        
        It is sufficient to prove that there are at most $6\log_{2} |S|$ different maximal pairs with the restrictions given by the prerequisites which fulfill $|S[n_k..i-1]|\geq |S[i..n_k+l_k-1]|$.
        By symmetry, the maximal pairs which fulfill the inequality $|S[n_k..i-1]|<|S[i..n_k+l_k-1]|$ can be bounded with an identical argument.
        
        Assume there are at least $\left\lfloor6\log_{2}(|S|)\right\rfloor+1$ different maximal pairs $(n_k, m_k, l_k)$ with $|S[n_k..i-1]|\geq |S[i..n_k+l_k-1]|$ and the restrictions given by the prerequisites.
        
        Since for all maximal pairs $1 \leq n_k$ holds, the inequality $i-n_k \leq |S\$|-1$ holds as well. On the other hand, since $S[n_k..i-1]$ is a fractional power with exponent greater than or equal to $3$, this substring has to contain at least three characters. Therefore, the inequality $3\leq i-n_k$ holds.
        
        Taking the logarithm yields
        \[1 < \log_{2}(3) \leq \log_{2}(i-n_k) \leq \log_{2}(|S\$|-1) \leq \lceil \log_{2}(|S|)\rceil\textup{.}\]
        For each maximal pair, the number $\log_{2}(i-n_k)$ lies in at least one of the $\lceil \log_{2}(|S|)\rceil - 1$ intervals $[h,h+1]$ with $1\leq h < \lceil \log_{2}(|S|)\rceil$.
        
        Using $\lceil \log_{2}(|S|)\rceil - 1 \leq \left\lfloor\log_{2}(|S|)\right\rfloor$, the pigeonhole principle now yields that there has to be a natural number $L'$ such that
        \[\left\lceil\frac{\left\lfloor6\log_{2}(|S|)\right\rfloor+1}{\left\lfloor\log_{2}(|S|)\right\rfloor}\right\rceil = 7\]
        of these maximal pairs have a starting index with $L' \leq \log_{2}(i-n_k) \leq 1+L'$.
        
        Therefore, for $L = 2^{L'}$ this gives a natural number $L$ such that $L\leq i-n_k\leq 2L$ holds for each of these $7$ maximal pairs.
        
        Since the index $i$ is contained in the interval $[n_k, n_k+l_k]$ and $|S[n_k..i-1]|\geq |S[i..n_k+l_k-1]|$ holds, the index $i$ is also contained in the interval $[n_k+\frac{l_k}{2}, n_k+l_k]$. Hence, the inequalities $n_k+\frac{l_k}{2} \leq i$ and thereby $\frac{l_k}{2} \leq i - n_k \leq 2L$ hold. Therefore, the length $l_k$ is at most $4L$.
        
        Since the index $j$ is contained in the interval $[m_k, m_k+l_k]$, this implies that the inequality $m_k\geq j-l_k \geq j-4L$ holds. On the other hand $m_k\leq j$ so the $m_k$ are in an interval of length~$4L$.
        
        Using the pigeonhole principle again, there are 
        \[\left\lceil\frac{7}{6}\right\rceil = 2\]
        of these maximal pairs $(n_a, m_a, l_a)$, $(n_b, m_b, l_b)$ such that the distance $|m_a-m_b|$ is at most~$\frac{2}{3}L$.
        
        According to Lemma \ref{lem:twoCubes}, both maximal pairs have the same minimal period length. Hence, the corresponding maximal repeats are of the form $p_a P^3 s_a r_a$ and $p_b P^3 s_b r_b$ where $p_a P^3$ and $p_b P^3$ are the $|P|$-periodic parts left of $i$, the substrings $s_a$ and $s_b$ are the maximal $|P|$-periodic extensions of $p_a P^3$ and $p_b P^3$ to the right and $r_a$ and $r_b$ are the remaining characters of the maximal repeats.
        
        Since the two $|P|$-periodic strings $p_a P^3 s_a$ and $p_b P^3 s_b$ starting at $n_a$ and $n_b$ overlap at least by $3|P|$ and since $s_a$ and $s_b$ are the maximal $|P|$-periodic extensions of $p_a P^3$ and $p_b P^3$ respectively, this implies that $s_a = s_b$. Therefore, the maximal repeats are of the form $p_a P^3 s r_a$ and $p_b P^3 s r_b$.
        
        Since $|m_a-m_b|\leq \frac{2}{3}L$ holds, we can show that the overlap of the $|P|$-periodic strings $p_a P^3 s$ and $p_b P^3 s$ starting at the indices $m_a$ and $m_b$ have at least an overlap of length $|P|$:
        
        The strings $p_a P^3 s$ and $p_b P^3 s$ have at least the length $3|P|$. Therefore, if $P\geq \frac{L}{3}$ holds, the overlap is at least $3|P|-\frac{2}{3}L\geq |P|$.
        
        The strings $p_a P^3 s$ and $p_b P^3 s$ also have at least the length $L$. Therefore, if $P\leq \frac{L}{3}$ holds, the overlap is at least $L-\frac{2}{3}L= \frac{L}{3} \geq |P|$.
        
        In either case, the overlap is at least as long as $P$.
        
        Therefore, the union of the occurrences of $p_a P^3 s$ and $p_b P^3 s$ starting at $m_a$ and $m_b$ is $|P|$-periodic. This implies that the strings $p_a P^3 s$ and $p_b P^3 s$ starting at $m_a$ and $m_b$ end with the same character.
        
        If the length of $p_a P^3 s$ and $p_b P^3 s$ is different, this implies that both occurrences of the smaller string starting at the indices $n_a$ and $m_a$ or at the indices $n_b$ and $m_b$ are preceded by the same character which is given by the $|P|$-periodic extension to the left. This, however, implies that either $(n_a, m_a, l_a)$ or $(n_b, m_b, l_b)$ is not a maximal pair.
        
        If, on the other hand, the length of $p_a P^3 s$ and $p_b P^3 s$ is equal, the starting indices $n_a$ and $n_b$ are equal and the starting indices $m_a$ and $m_b$ are equal as well. This, however, is only possible if either $(n_a, m_a, l_a)$ or $(n_b, m_b, l_b)$ is not a maximal pair or if both maximal pairs are identical.
        
        Since both cases contradict the assumption, the assumption is wrong and the theorem is therefore true.
    \end{proof}
    
    \begin{corollary}
        Let $S$ be a string. Let $q\in \mathbb{N}_{\geq 3}$ be an even natural number and $i, j$ be indices of two characters in $S\$$.
        
        Then there are at most $12(1+3\frac{q}{\log_2 q})\log_2 |S|$ different maximal pairs $(n_k, m_k, l_k)$ such that for all $k$
        \begin{itemize}
            \item the corresponding maximal repeat $S[n_k..n_k+l_k-1]$ is not a fractional power with exponent greater than or equal to $2q$,
            \item the index $i$ is contained in the interval $[n_k, n_k+l_k]$ and
            \item the index $j$ is contained in the interval $[m_k, m_k+l_k]$.
        \end{itemize}
        
        Let $z$ be the number of LZ77-factors of $S$. 
        
        Then there are at most $12(1+3\frac{q}{\log_2 q})(\log_2 |S|)(z+1)(z+2)$ different double-sided extensions of maximal repeats that are not a fractional power with exponent greater than or equal to $2q$.
    \end{corollary}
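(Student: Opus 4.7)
My plan is to partition the maximal pairs in question into two groups based on how periodic the longer of the two sides $S[n_k..i-1]$ and $S[i..n_k+l_k-1]$ is, and to bound each group using Lemma~\ref{lem:nonhighly} and Theorem~\ref{thm:nonhighly-nonextended} respectively. Pairs whose longer side is \emph{not} a fractional power with exponent at least $2q$ fulfil the hypothesis of Lemma~\ref{lem:nonhighly} applied with parameter $2q$, giving a bound of $18\cdot 2q(1+\log_{2q}|S|)=36q(1+\log_{2q}|S|)$ for this group.

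For the remaining pairs the longer side is a fractional power with exponent at least $2q$ and some minimal period~$\Delta$. Writing $L$ for its length, $L\ge 2q\Delta$ and $L\le l_k$ give $\Delta\le l_k/(2q)$. If the maximal repeat were $\Delta$-periodic, its exponent would then be at least $l_k/\Delta\ge 2q$, contradicting the corollary's hypothesis. Hence the maximal repeat is not $\Delta$-periodic, and since $2q\ge 3$ the hypothesis on the longer side in Theorem~\ref{thm:nonhighly-nonextended} is satisfied; the theorem then bounds the second group by $12\log_2|S|$. Combining the two contributions and bounding them by the claimed $12(1+3q/\log_2 q)\log_2|S|$ (using $q\ge 3$, $q$ even) completes the first part.

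For the second part I would invoke Lemma~4 of \cite{Pape-Lange}, by which every substantially different maximal pair has a copy whose two sides each cross an LZ77-factor boundary. Summing the first-part bound over all $\binom{z+2}{2}$ unordered pairs $(i,j)$ of LZ77-starting indices, and multiplying by~$2$ since each substantially different maximal pair introduces at most two distinct double-sided extensions of its maximal repeat (one per occurrence), yields the claimed $12(1+3q/\log_2 q)(\log_2|S|)(z+1)(z+2)$. I expect the main obstacle to be the choice of the exponent threshold~$2q$ in the case split: with the more obvious threshold~$q$ one only gets $\Delta\le l_k/q$, and since the longer side can be nearly as long as the repeat itself, the induced exponent of the repeat could be as small as $q$, which is insufficient to contradict the corollary's hypothesis. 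The doubling of the threshold is precisely what lets the hypothesis ``not a $2q$-th power'' exclude the case in which the maximal repeat inherits the longer side's periodicity.
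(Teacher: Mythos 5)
Your overall architecture matches the paper's: split the maximal pairs according to whether the longer of the two sides of $i$ is highly periodic, bound the first group by Lemma~\ref{lem:nonhighly} and the second by Theorem~\ref{thm:nonhighly-nonextended}, then sum over the $\binom{z+2}{2}$ unordered pairs of LZ77-boundaries and double for the two occurrences. The one substantive difference is the threshold in the case split, and here your choice is the logically sound one. The paper splits at exponent $q$ and asserts that if the longer side is a $q$-power with period $\Delta$ whose periodicity extends over the whole repeat, the repeat is a $2q$-power; but the longer side has length $L$ with $l_k/2\le L\le l_k$, so one only gets $\Delta\le L/q$ and hence exponent $l_k/\Delta\ge q\,l_k/L\ge q$, not $2q$ (take $i=n_k$, so that the longer side is the entire repeat, which may be a $q$-power without being a $2q$-power; then neither branch of the paper's dichotomy applies). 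Your split at $2q$ needs only $L\le l_k$, giving $\Delta\le l_k/(2q)$, so the contradiction with ``not a $2q$-power'' genuinely goes through --- you correctly identified this as the crux.

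The gap is in your final arithmetic. Splitting at $2q$ means invoking Lemma~\ref{lem:nonhighly} with parameter $2q$, which costs $36q(1+\log_{2q}|S|)$ rather than the paper's $18q(1+\log_q|S|)$. Matching the claimed constant would require $36q(1+\log_{2q}|S|)\le 36q\log_q|S|$, i.e.\ $\log_2|S|\ge(\log_2 q)(\log_2 2q)$, and this fails for admissible inputs even under the reduction to $q\le|S|$: for $q=4$ and $|S|=32$ your two contributions sum to $444$ while the stated bound is $420$. So as written you prove the corollary only with a somewhat larger constant (an extra additive term of order $36q$ beyond the $\log_{2q}$-versus-$\log_q$ slack), not the stated $12(1+3\frac{q}{\log_2 q})\log_2|S|$. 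Since the statement is precisely about an explicit constant, you must either adjust the bound you claim or do additional work to absorb the surplus; the inequality you wave at in the last step of the first part does not hold as stated.
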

    \begin{proof}
        Without loss of generality, the inequality $q\leq |S|$ holds, since higher values for $q$ do not increase the number of permitted maximal pairs but do increase the upper bound.
        
        If a maximal repeat $[n_k, n_k+l_k]$ is not a fractional power with exponent greater than or equal to $2q$, then either the longer of the parts $S[n_k..i-1]$ and $S[i..n_k+l_k-1]$ is
        \begin{itemize}
            \item not a fractional power with exponent greater than or equal to $q$ or
            \item a fractional power with exponent greater than or equal to $q\geq3$ of which the periodicity does not extend to the whole maximal repeat $[n_k, n_k+l_k]$.
        \end{itemize}
        
        Therefore, the number of different maximal pairs which fulfill the prerequisites can be bound by Lemma \ref{lem:nonhighly} and Theorem \ref{thm:nonhighly-nonextended} and there are at most
        \[18q(1+\log_{q}|S|)+12(\log_{2}|S|)<36q(\log_{q}|S|)+12(\log_{2}|S|)=12(1+3\frac{q}{\log_2 q})\log_2 |S|\]
        of those maximal pairs.
        
        Summing up over the first indices $i\leq j$ of the $z+1$ LZ77-factors of $S\$$ yields that there are at most
        \[\sum_{i=1}^{z+1}\sum_{j=i}^{z+1} 12(1+3\frac{q}{\log_2 q})\log_2 |S| = 6(1+3\frac{q}{\log_2 q})\log_2 |S|(z+1)(z+2)\]
        substantially different maximal pairs, whose corresponding maximal repeat is not a fractional power with exponent greater than or equal to $2q$. 
        
        Hence, there are at most $12(1+3\frac{q}{\log_2 q})(\log_2 |S|)(z+1)(z+2)$ different double-sided extensions of maximal repeats that are not a fractional power with exponent greater than or equal to $2q$.
    \end{proof}
    
    For $q=3$ this proves that the number of substantially different maximal pairs whose corresponding maximal repeats is not at least a sixth power is bounded from above by $41(\log_2 |S|)(z+1)(z+2)$.
    
    \section{Highly-Periodic Maximal Pairs}
    
    The goal of this section is to prove that in a string $S$ with $z$ LZ77-factors the number of substantially different maximal pairs whose corresponding positioned maximal repeats are at least fourth powers of which at least one is not periodically extendable by more than one period length is bounded from above by $32(\log_2 |S|)(z+1)^2$.
    
    Both occurrences of those maximal pairs, including the corresponding maximal repeat as well as the preceding and succeeding characters, are inside of the two padded maximal periodic extensions of the corresponding positioned maximal repeats.
    
    Therefore, we will first count the number of substantially different padded maximal periodic extensions of fourth powers and the number of substantially different padded maximal periodic extensions of a given fourth power. Afterwards, we will show that each pair of padded maximal periodic extensions gives rise to at most $4$ maximal pairs whose corresponding positioned maximal repeats are at least fourth powers of which at least one is not periodically extendable by more than period length.
    
    We will need the ``Three Squares Lemma'' of Crochemore and Rytter presented in \cite{CrochemoreTSL}.
    
    \begin{lemma}
        Let $u$, $v$ and $w$ be primitive and let $u^2$, $v^2$ and $w^2$ be prefixes/suffixes of $S$ with $|u|<|v|<|w|$.
        
        Then $|w|>|u|+|v|$ holds.
    \end{lemma}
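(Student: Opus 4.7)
The plan is to argue by contradiction: assume $|w|\leq |u|+|v|$, and (by reversing $S$ if necessary) assume without loss of generality that $u^2$, $v^2$, $w^2$ are all prefixes of $S$.

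The first step exploits $|v|<|w|<2|v|$ to place $w$ as a proper prefix of $v^2$. Writing $w=v\cdot v'$ with $v'$ a non-empty proper prefix of $v$ of length $|w|-|v|$, I would compare the prefix $v^2$ of $S$ with the prefix $w^2=vv'vv'$ of $S$ on positions $|v|+1,\ldots,2|v|$. The forced identity $v=v'\cdot v[1..|v|-|v'|]$ shows that $v$ has the short period $|v'|=|w|-|v|<|v|$.

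The second step repeats this structural analysis one level down, using that $u^2$ is a prefix of $v^2$. The relevant split is whether $2|u|\leq |v|$ (so $u^2$ sits inside the first copy of $v$ and forces $v$'s initial segment of length $2|u|$ to carry the period $|u|$) or $2|u|>|v|$ (so $u^2$ straddles the $u$-boundary inside $v$ and an analogous comparison forces $u$ to have the period $|v|-|u|$). Either way, I obtain a second short period on a sufficiently long prefix of $S$, and the hypothesis $|w|\leq |u|+|v|$ is exactly what keeps the sum of the two derived periods small enough for Fine and Wilf's periodicity lemma to apply and combine them into a common period equal to their gcd.

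The final step uses this common period to violate primitivity: it divides each of the derived periods, and a synchronisation argument on how $u$, $v$, $w$ are aligned as prefixes of $S$ should show that in each subcase it also divides the full length $|u|$, $|v|$, or $|w|$, exhibiting the corresponding word as a proper integer power of a shorter word. The main obstacle is precisely this last synchronisation: Fine and Wilf alone yields that a short common period exists on some prefix, but one has to use the prefix-alignment structure to upgrade it to a period dividing a full length. I expect the hardest subcase to be $2|u|>|v|$ together with $|w|$ close to $|u|+|v|$, where all three lengths are comparable and the derived periods $|w|-|v|$ and $|v|-|u|$ are close to each other, so the gcd computation and the check that it divides one of $|u|,|v|,|w|$ require the most care.
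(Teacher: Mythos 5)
First, note that the paper does not prove this statement at all: it is the ``Three Squares Lemma'' of Crochemore and Rytter, quoted from \cite{CrochemoreTSL} and used as a black box, so there is no in-paper proof to compare your attempt against. Judged on its own merits, your proposal correctly sets up the contradiction hypothesis $|w|\leq|u|+|v|$ and correctly derives the two standard periodicity facts: since $|v|<|w|<2|v|$, the word $v$ has period $|w|-|v|$, and, depending on whether $2|u|\leq|v|$ or $2|u|>|v|$, either $v[1..2|u|]=u^2$ has period $|u|$ or $u$ has period $|v|-|u|$. Those derivations are sound.

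The gap is in the combination step, and it is not only the synchronisation issue you flag at the end. Your claim that the hypothesis $|w|\leq|u|+|v|$ is ``exactly what keeps the sum of the two derived periods small enough for Fine and Wilf'' fails in the case $2|u|>|v|$: there the periods $p_1=|w|-|v|$ and $p_2=|v|-|u|$ are only known to coexist on the prefix of length $|u|$, and Fine--Wilf requires $|u|\geq p_1+p_2-\gcd(p_1,p_2)=|w|-|u|-\gcd(p_1,p_2)$, i.e.\ $|w|\leq 2|u|+\gcd(p_1,p_2)$. Your assumption only gives $|w|\leq|u|+|v|$, which is strictly weaker whenever $|v|-|u|>\gcd(p_1,p_2)$; for instance $|u|=10$, $|v|=19$, $|w|=29$ satisfies your hypothesis but not the Fine--Wilf length condition. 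So a single Fine--Wilf application does not go through, and the standard proofs instead use the synchronisation property of primitive words (a primitive $x$ occurs in $xx$ only as a prefix and a suffix) applied to a well-chosen occurrence of $u^2$ or of $v$ inside $w^2$. Together with the final step you explicitly leave open (upgrading the common period to a divisor of one of the full lengths, which is where primitivity must enter) and the boundary case $|w|=|u|+|v|$ needed for the strict inequality, the proposal as written does not constitute a proof; citing \cite{CrochemoreTSL}, as the paper does, is the appropriate resolution.
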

    
    \begin{lemma}
        Let $S$ be a string and $i$ be an index of a character in $S\$$.
        
        Then there are at most $4\left\lfloor\log_{2} |S|\right\rfloor$ substantially different padded maximal periodic extensions $(l-1,r+1)$ of fourth powers such that $l-1<i\leq r+1$.
    \end{lemma}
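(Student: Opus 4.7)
My plan is to fix the index $i$ and to classify each candidate padded maximal periodic extension $(l-1,r+1)$ by its minimal period length $\Delta$. Since the $\Delta$-periodic interior $S[l..r]$ is at least a fourth power, $r-l+1 \geq 4\Delta$; combined with $l \leq i \leq r+1$, this forces at least one of the inequalities $i-l \geq 2\Delta$ (Case A) or $r+1-i \geq 2\Delta$ (Case B) to hold. I will bound the number of substantially different extensions in each of the two cases by $2\lfloor \log_2 |S|\rfloor$ and then add.

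In Case A, the substring $v^2 = S[i-2\Delta..i-1]$ lies entirely inside the $\Delta$-periodic interior, and $v = S[i-\Delta..i-1]$ is a cyclic rotation of the minimal period block $S[l..l+\Delta-1]$; since that block is primitive (as $\Delta$ is minimal), so is $v$. Thus $v^2$ is a primitive square ending at position $i-1$. The key structural point is that, given $i$ and $\Delta$ with Case A in force, the whole padded extension is uniquely determined: $[l,r]$ is the unique maximal $\Delta$-periodic run containing $[i-2\Delta,\,i-1]$, and the two padding characters are then read off from $S$. So counting Case A extensions up to copies reduces to counting distinct periods $\Delta$ that admit a primitive square suffix of length $2\Delta$ in $S[1..i-1]$. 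Case B is handled symmetrically via primitive square prefixes of $S[i..|S|]$.

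These two counts are controlled by the Three Squares Lemma recalled just above. If $\Delta_1 < \Delta_2 < \cdots < \Delta_m$ are the Case A periods, the lemma yields $\Delta_{k+2} > \Delta_k + \Delta_{k+1} > 2\Delta_k$, so the odd-indexed and the even-indexed subsequences of $(\Delta_k)$ each at least double at every step. Combined with the trivial upper bound $2\Delta_k \leq i-1 \leq |S|$, a short calculation gives $m \leq 2\lfloor \log_2 |S|\rfloor$. The same estimate applies to Case B, and the two contributions sum to the claimed $4\lfloor \log_2 |S|\rfloor$.

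The step I expect to require the most care is the uniqueness claim inside each case, since it is precisely what allows passing from ``distinct padded extensions'' to ``distinct periods $\Delta$''. The verification relies on the standard fact that two maximal $\Delta$-periodic runs overlapping in at least $\Delta$ consecutive positions must coincide, applied to the common interval $[i-2\Delta,\,i-1]$ in Case A (or $[i,\,i+2\Delta-1]$ in Case B). Once this is in hand, the Three Squares Lemma plugs in directly and gives the claimed bound.
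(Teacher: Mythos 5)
Your proof is correct and follows essentially the same route as the paper's: split by which side of $i$ contains a full square of the period, reduce to counting primitive square suffixes of $S[1..i-1]$ (resp.\ prefixes on the other side) via the overlap/uniqueness argument, and apply the Three Squares Lemma to get $2\lfloor\log_{2}|S|\rfloor$ per side. The only cosmetic difference is that the paper phrases the case split as ``at least half of $[l,r]$ lies left of $i$'' rather than your ``at least $2\Delta$ positions lie left of $i$''.
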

    \begin{proof}
        In this proof we will only count the number of padded maximal periodic extensions $(l-1,r+1)$ of fourth powers such that at least half of the interval $[l,r]$ is smaller than $i$, i.e. $l+\frac{r-l+1}{2} \leq i$. The other case $l+\frac{r-l+1}{2} \geq i$ is symmetrical.
        
        Since $S[l..r]$ is at least a fourth power, the string $S[l..i-1]$ is at least a square. Therefore, two maximal periodic extensions $(l,r)$ and $(l',r')$ of fourth powers have an  overlap of least twice the smaller minimal period length. Therefore, if their minimal period lengths are equal, the padded maximal periodic extensions $(l-1,r+1)$ and $(l'-1,r'+1)$ copies of each other. Conversely, if $(l-1,r+1)$ and $(l'-1,r'+1)$ are substantially different, then they have different minimal period lengths as well.
        
        This implies that the number of substantially different padded maximal periodic extensions $(l-1,r+1)$ of fourth powers such that at least half of the interval $[l,r]$ is smaller than $i$ is bounded from above by the number of different primitive squares that are suffixes of $S[1..i-1]$.
        
        The three squares lemma implies that for three primitive squares which are suffixes of each other, the largest square is more than twice as long as the smallest square.
        
        Since the smallest square has at least two characters and the largest square has at most $|S|$ characters, there are at most $2\left\lfloor\log_{2} |S|\right\rfloor$ primitive squares which are suffixes of $S[1..i-1]$.
        
        Therefore, there are at most $2\left\lfloor\log_{2} |S|\right\rfloor$ padded maximal periodic extensions $(l-1,r+1)$ of fourth powers such that at least half of the interval $[l,r]$ is smaller than $i$, i.e. $l+\frac{r-l+1}{2} \leq i$.
        
        This implies that the number of padded maximal periodic extensions of fourth powers $(l-1,r+1)$ such that $l-1<i\leq r+1$ is bounded from above by $4\left\lfloor\log_{2} |S|\right\rfloor$.
    \end{proof}
    
    The proof also allows another useful conclusion.
    
    \begin{corollary}
        Let $S$ be a string and $i$ be an index of a character in $S\$$. Furthermore, let $P$ be a substring of $S$ which is at least a fourth power.
        
        Then there are at most $2$ substantially different padded maximal periodic extensions $(l-1,r+1)$ of cyclic permutations of $P$ such that $l-1<i\leq r+1$.
    \end{corollary}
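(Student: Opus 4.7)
The plan is to reuse the key claim inside the proof of the preceding lemma, which establishes that any two padded maximal periodic extensions of fourth powers that share the same minimal period length and that both place the index $i$ in the same half of $[l,r]$ are automatically copies of each other. In the setting of this corollary the minimal period length is fixed once and for all by the choice of $P$, so the logarithmic factor that in the preceding lemma came from the three squares lemma disappears, and only the factor $2$ from the left/right split remains.

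First, I would observe that, since $P$ is at least a fourth power, its primitive root $p$ has some length $\Delta$ with $|P|\geq 4\Delta$, and every cyclic permutation of $P$ has the form $(p')^{k}q'$ for a rotation $p'$ of $p$. Because a rotation of a primitive word is again primitive of the same length, each cyclic permutation of $P$ is itself a fractional power with exponent at least $4$ and minimal period length exactly $\Delta$. In particular, every padded maximal periodic extension $(l-1,r+1)$ of a cyclic permutation of $P$ is a padded maximal $\Delta$-periodic extension with the same $\Delta$ in every case.

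Then I would split by whether $l+\frac{r-l+1}{2}\leq i$ or $l+\frac{r-l+1}{2}\geq i$ and, by symmetry, handle only the first of these two cases. In that case $S[l..i-1]$ is at least a square, so any two such extensions $(l_1-1,r_1+1)$ and $(l_2-1,r_2+1)$ under consideration have an overlap of at least $2\Delta$ near $i$; by the argument of the preceding lemma, their coinciding minimal period length $\Delta$ then forces $(l_1-1,r_1+1)$ and $(l_2-1,r_2+1)$ to be copies of each other. Hence the right-half case produces at most one substantially different padded extension, and together with its mirror image the total number is at most~$2$.

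The only real point of verification is the opening step, namely the invariance of the minimal period length under taking cyclic permutations of $P$; once this is in hand, the corollary is a direct specialisation of the technique already developed for the preceding lemma.
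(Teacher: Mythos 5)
Your proposal is correct and matches the paper's intent exactly: the paper gives no separate proof, stating only that ``the proof also allows another useful conclusion,'' i.e.\ that fixing $P$ fixes the minimal period length $\Delta$ (invariant under cyclic rotation, as you verify), so the copy-or-different-period dichotomy from the preceding lemma's proof leaves only the factor $2$ from the left/right split.
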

    
    Combining the previous corollary with the lemma before gives rise to an upper bound of the pairs of corresponding maximal periodic extensions.
    
    \begin{lemma}
        Each pair of padded maximal periodic extensions of fourth powers which are up to cyclic rotation identical gives rise to at most $4$ maximal pairs whose corresponding positioned maximal repeats are at least fourth powers of which at least one is not periodically extendable by more than period length.
    \end{lemma}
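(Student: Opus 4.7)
Let the two padded maximal periodic extensions be $(l_1-1,r_1+1)$ and $(l_2-1,r_2+1)$, with common minimal period $\Delta$, common length $L=r_1-l_1+1=r_2-l_2+1$, and a unique cyclic shift $\sigma\in\{0,\dots,\Delta-1\}$ relating their primitive length-$\Delta$ patterns. Any maximal pair $(n,m,l)$ that the lemma counts has both occurrences inside these extensions, length $l\geq 4\Delta>\Delta$, and at least one of its two positioned occurrences has total periodic extension room at most $\Delta$ inside its padded extension. The plan is to first phase-match the two occurrences, then use the two mismatch conditions to anchor at least one endpoint of at least one occurrence to a padded boundary, and finally enumerate the resulting cases.

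For the phase-matching step, each occurrence lies inside a $\Delta$-periodic region and has length strictly greater than $\Delta$, so primitivity of the length-$\Delta$ pattern forces the phases $\phi_1=(n-l_1)\bmod\Delta$ and $\phi_2=(m-l_2)\bmod\Delta$ to satisfy $\phi_2\equiv\phi_1-\sigma\pmod\Delta$. For the anchoring step, if both $n>l_1$ and $m>l_2$ held, then $S[n-1]$ and $S[m-1]$ would both be interior characters of the two periodic extensions and the phase-matching relation would give $S[n-1]=S[m-1]$, contradicting the left mismatch of the maximal pair. Hence $n=l_1$ or $m=l_2$, and by the symmetric argument on the right mismatch, $n+l-1=r_1$ or $m+l-1=r_2$.

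The two binary choices above give four cases, and I plan to show each yields at most one maximal pair. If one entire occurrence is anchored to both padded ends (for instance $n=l_1$ and $n+l-1=r_1$, or the mirror case on occurrence $2$), then $l=L$ is forced and the other occurrence is pinned to $m=l_2$, giving a single pair. In each of the two mixed cases (for instance $n=l_1$ and $m+l-1=r_2$), the phase constraint from the first step confines $l$ to a single residue class modulo $\Delta$, and the ``extension room at most $\Delta$'' hypothesis implies $L-l\leq\Delta$, so $l$ is also restricted to an interval of length $\Delta$; at most one value of $l$ survives both restrictions. Summing the four cases gives the claimed bound of four.

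The main obstacle is the edge-anchoring step: one has to verify carefully, using primitivity of the length-$\Delta$ pattern together with the cyclic-rotation relation between the two extensions, that two interior characters at matching phases really do coincide, so that a maximal pair cannot be realised strictly in the interior of both extensions. Once this is in place, everything else is a short combinatorial enumeration; the use of the ``at least a fourth power'' hypothesis is only to guarantee that the overlap $l$ exceeds $\Delta$, which is what makes the phase matching step work.
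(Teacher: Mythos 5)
Your proof is correct and follows essentially the same route as the paper's (much terser) argument: the mismatch conditions force one occurrence to start at the left end of its periodic extension and one to end at the right end, and the phase constraint from primitivity together with the non-extendability hypothesis then pins down the length, giving four cases. One small slip: in the mixed cases the constraint $L-l\leq\Delta$ leaves an interval of $\Delta+1$ integers, so up to \emph{two} values of $l$ in the fixed residue class survive (namely $l=L$ and $l=L-\Delta$); but the $l=L$ solution coincides with the fully-anchored case, so the total bound of $4$ is unaffected.
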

    \begin{proof}
        Each maximal pair has to be a prefix of the one padded maximal periodic extension and a suffix of the other padded maximal periodic extension, otherwise both corresponding positioned maximal repeats would be preceded or succeeded by the same character. There are two choices of which padded maximal periodic extension the corresponding positioned maximal repeat is a prefix.
        
        For a fixed choice, the length of the maximal repeat is fixed, up to a multiple of the period length. Therefore there are only two possible lengths such that at least one of the positioned maximal repeat is not periodically extendable by more than period length.
    \end{proof}
    
    This leads to the wanted upper bound:
    
    \begin{corollary}
        Let $S$ be a string. Let $z$ be the number of LZ77-factors in an LZ77-decomposition of $S$.
        
        Then there are at most $8(z+1)^2\log_{2} |S|$ substantially different pairs of padded maximal periodic extensions of fourth powers which are up to cyclic rotation identical.
        
        Also, there are at most $32(z+1)^2\log_{2} |S|$ substantially different maximal pairs whose corresponding positioned maximal repeats are at least fourth powers of which at least one is not periodically extendable by more than period length.
    \end{corollary}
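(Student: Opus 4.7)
The plan is to reduce the count to a double sum over LZ77-boundary positions and then apply the preceding lemma and corollary on padded extensions to each summand. The second bound follows from the first by multiplying by~$4$ via the preceding lemma, so the main task is the first bound on pairs of padded maximal periodic extensions.

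To index the count by LZ77-factor starting indices, I invoke Lemma~4 of \cite{Pape-Lange}: every maximal pair has a copy in which both occurrences' double-sided extensions cross LZ77-factor boundaries of $S\$$. Each occurrence of a positioned maximal repeat, together with its flanking characters, sits inside its padded maximal periodic extension, so the same copy transformation replaces a pair of padded extensions by a substantially equivalent one in which the first contains some LZ77-starting index $s_i$ and the second contains some $s_j$.

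For fixed $(s_i,s_j)$, the preceding lemma bounds by $4\lfloor \log_2 |S|\rfloor$ the number of substantially different padded maximal periodic extensions of fourth powers containing $s_i$. Given one such extension with underlying fourth power $P$, the preceding corollary bounds by $2$ the number of substantially different padded maximal periodic extensions of cyclic permutations of $P$ containing $s_j$. Hence each anchor pair contributes at most $8\lfloor \log_2 |S|\rfloor$ pairs of padded extensions. Summing over unordered anchor pairs among the $z+1$ starting indices in $S\$$ gives at most $\binom{z+2}{2}\cdot 8\log_2 |S| = 4(z+1)(z+2)\log_2 |S| \leq 8(z+1)^2\log_2 |S|$, and multiplying by $4$ via the preceding lemma yields $32(z+1)^2\log_2 |S|$ for the second bound.

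The main subtlety lies in the very first step: one must check that the LZ77-boundary-crossing copy operation of \cite{Pape-Lange}, originally stated for maximal pairs, transfers cleanly to pairs of padded maximal periodic extensions without conflating substantially different pairs. This should go through because such a copy simply replaces an occurrence of a maximal repeat by another with the same preceding and succeeding character, which forces identical padded periodic extensions up to translation, so the substantially-different equivalence class is preserved and no double-counting occurs.
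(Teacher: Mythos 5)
Your proposal is correct and follows essentially the route the paper intends: anchor each pair at two LZ77-starting indices via the copy argument of Lemma~4 of \cite{Pape-Lange}, apply the preceding lemma ($4\lfloor\log_2|S|\rfloor$ padded extensions per anchor) and corollary ($2$ cyclically equivalent ones per second anchor) to get $8\log_2|S|$ per anchor pair, sum over the $\binom{z+2}{2}$ anchor pairs, and multiply by $4$ for the maximal pairs. The arithmetic $4(z+1)(z+2)\log_2|S|\leq 8(z+1)^2\log_2|S|$ checks out, and your closing remark correctly identifies (and resolves) the only delicate point, namely that the boundary-crossing copy respects the substantially-different equivalence on padded extensions.
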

    
    \section{RLBWT and Maximal Pairs}
    
    The goal of this section is to prove that the runs of the RLBWT of a string $S$ with $z$ LZ77-factors correspond to a subset of the maximal pairs, whose size can be bound from above by $73(\log_2 |S|)(z+2)^2$.
    
    Since we are interested in the number of runs, it is useful to observe the indices $i$ where a new run starts. These are exactly the index $1$ and the indices $i$ with $S[\pi_{i-1}]\neq S[\pi_i]$.
    
    Since $\$$ occurs exactly once in $S\$$, the strings $S[{\pi_{i-1}}+1..|S|+1]$ and $S[{\pi_i}+1..|S|+1]$ have a mismatch. Also, since $S[{\pi_{i-1}}+1..|S|+1]S[1..\pi_{i-1}]$ is lexicographically smaller than $S[{\pi_i}+1..|S|+1]S[1..\pi_i]$, the string $S[{\pi_{i-1}}+1..|S|+1]$ is lexicographically strictly smaller than $S[{\pi_i}+1..|S|+1]$.
    
    Let $m$ be the index of the first mismatch of these two strings. With this notation, the strings $S[{\pi_{i-1}}+1..{\pi_{i-1}}+m-1]$ and $S[{\pi_{i}}+1..{\pi_{i}}+m-1]$ are equal and their predecessors as well as their successors are different. Therefore, if $m>0$, they form a maximal pair. If $m=0$, then $S[{\pi_{i-1}}+1] < S[{\pi_{i}}+1]$. This, however can only occur $|\Sigma|$ times.
    
    On the other hand, since $S[{\pi_{i-1}}+1..{\pi_{i-1}}+m]$ is smaller than $S[{\pi_i}+1..|S|+1]S[1..\pi_i]$ and $S[{\pi_{i}}+1..{\pi_{i}}+m]$ is larger than $S[{\pi_{i-1}}+1..|S|+1]S[1..\pi_{i-1}]$, this maximal pair can only correspond to this pair $(\pi_{i-1},\pi_{i})$ of lexicographically neighbored acyclic permutations.
    
    \begin{remark}
        Belazzougui et al.\ show in Theorem 1 of \cite{Belazzougui:lower_bound} that the number of runs in the Burrows-Wheeler transform is even bounded in the number of right extensions of the maximal repeats. However, maximal pairs are easier to handle then right extensions of maximal repeats and we only lose a factor $\Sigma$ in the worst-case by not using the right extensions.
    \end{remark}
    
    However, while the number of maximal repeats and thereby the number of nodes in the CDAWG can be $\Theta (qz^3)$ for a suitable set of strings, the Burrows-Wheeler transform does not suffer from high powers as the CDAWG does:
    
    \begin{lemma} \label{lem:non-extendability}
        Let $S$ be a string and let $i$ be an index at which a new run in the Burrows-Wheeler transform starts. 
        Let $P$ be the maximal repeat $S[{\pi_{i}}+1..{\pi_{i}}+m-1]$ of the associated maximal pair and
        let $p$ be a prefix of $P$ such that there is a rational number $q$ with $P=p^q$.
        
        Then the maximal $|p|$-periodic extension of either
        \begin{itemize}
            \item $S[{\pi_{i-1}}+1..{\pi_{i-1}}+m-1]$ or
            \item $S[{\pi_{i}}+1..{\pi_{i}}+m-1]$ or
            \item both occurrences
        \end{itemize}
        contains less than $|p|+1$ additional characters.
    \end{lemma}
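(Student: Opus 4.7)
The plan is to prove this by contradiction: suppose that both occurrences $A := S[\pi_{i-1}+1..\pi_{i-1}+m-1]$ and $B := S[\pi_i+1..\pi_i+m-1]$ of $P$ have maximal $|p|$-periodic extensions containing at least $|p|+1$ additional characters. I write $\alpha := \pi_{i-1}+1$, $\beta := \pi_i+1$, $M := m-1$, and denote by $T_x := S[x..|S|+1]S[1..x-1]$ the acyclic permutation starting at position $x$. The BWT-adjacency of $T_\alpha$ and $T_\beta$ yields two inequalities that drive the whole argument: $S[\alpha-1]\neq S[\beta-1]$, because $i$ starts a new run, and $S[\alpha+M] < S[\beta+M]$, because $T_\alpha$ is lex-smaller with first mismatch at position $m$.

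First I would use these inequalities to bound which sides can extend. Extending $A$ or $B$ by one character to the right forces that character to equal $p[(M\bmod |p|)+1]$, and extending by one character to the left forces the preceding character to equal $p[|p|]$. Consequently $S[\alpha+M]\neq S[\beta+M]$ rules out both $A$ and $B$ extending to the right, and $S[\alpha-1]\neq S[\beta-1]$ rules out both extending to the left. Under the negated conclusion each occurrence must contribute at least $|p|+1$ additional characters, so neither can have a trivial extension on either side alone; exactly one of them extends only to the right and the other only to the left. Without loss of generality, $A$ extends by $a_R\geq |p|+1$ to the right and $B$ extends by $b_L\geq |p|+1$ to the left; the mirror subcase is symmetric.

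The main step is to exhibit an acyclic permutation that lies lexicographically strictly between $T_\alpha$ and $T_\beta$, contradicting their BWT-adjacency. Because $B$ extends leftward by at least $|p|$, $S[\beta-|p|..\beta-1] = p$, so $T_{\beta-|p|}$ begins with $p\cdot P = p^{q+1}$ (length $M+|p|$) and has $S[\beta+M]$ in position $M+|p|+1$. Because $A$ extends rightward by at least $|p|+1$, $T_\alpha$ also begins with $p^{q+1}$, whose $(M+|p|+1)$-th character equals the periodic value $p[(M\bmod |p|)+1] = S[\alpha+M]$. Comparing $T_\alpha$ with $T_{\beta-|p|}$: the first $M+|p|$ characters coincide, and at position $M+|p|+1$ the inequality $S[\alpha+M]<S[\beta+M]$ yields $T_\alpha<T_{\beta-|p|}$. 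Comparing $T_{\beta-|p|}$ with $T_\beta$: the first $M$ characters coincide (both equal $P$), and at position $M+1$ the periodic continuation gives $p[(M\bmod |p|)+1] = S[\alpha+M]$ on the left and $S[\beta+M]$ on the right, so $T_{\beta-|p|}<T_\beta$. Hence $T_\alpha<T_{\beta-|p|}<T_\beta$, contradicting the lex-adjacency of $T_\alpha$ and $T_\beta$.

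The main obstacle I anticipate is bookkeeping rather than ideas: one must verify that $\beta-|p|\geq 1$ (which follows from $b_L\geq |p|$ and the extension lying inside $S$), check that $T_{\beta-|p|}$ is genuinely distinct from both $T_\alpha$ and $T_\beta$ (immediate from the stated mismatches), and keep the period indexing straight when $q$ is a non-integer rational, so that $p[(M\bmod |p|)+1]$ really is the next character of the $|p|$-periodic continuation of $P$. The conceptual core is short: the two long opposite-side extensions force a third occurrence of $p^{q+1}$ at position $\beta-|p|$ whose immediate continuation is $S[\beta+M]$, and the comparison $S[\alpha+M]<S[\beta+M]$ then slots $T_{\beta-|p|}$ strictly between $T_\alpha$ and $T_\beta$.
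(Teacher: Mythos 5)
Your proof is correct and follows essentially the same route as the paper's: rule out same-side extensions using $S[\pi_{i-1}]\neq S[\pi_i]$ and $S[\pi_{i-1}+m]\neq S[\pi_i+m]$, then shift the left-extendable occurrence left by $|p|$ to obtain a cyclic permutation lying lexicographically strictly between the two supposedly adjacent ones. The only (immaterial) difference is which occurrence you take, without loss of generality, to be the left-extendable one: you construct $T_{\pi_i+1-|p|}$ where the paper constructs $T_{\pi_{i-1}+1-|p|}$, and both arguments appeal to symmetry for the mirror subcase.
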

    \begin{proof}
        Assume that the maximal $|p|$-periodic extensions of both occurrences $S[{\pi_{i-1}}+1..{\pi_{i-1}}+m-1]$ and $S[{\pi_{i}}+1..{\pi_{i}}+m-1]$ contain at least $|p|+1$ additional characters. In this proof, we will show that under this assumption, there is a cyclic permutation $S[w+1..|S|+1]S[1..w]$ of $S\$$ which is lexicographically between $S[{\pi_{i-1}}+1..|S|+1]S[1..\pi_{i-1}]$ and $S[{\pi_{i}}+1..|S|+1]S[1..\pi_{i}]$.
        
        If the maximal $|p|$-periodic extension of $S[{\pi_{i-1}}+1..{\pi_{i-1}}+m-1]$ extends this occurrence to the left, the equation $S[{\pi_{i-1}}]=S[{\pi_{i-1}}+|p|]$ and thereby \[S[{\pi_{i}}]\neq S[{\pi_{i-1}}]=S[{\pi_{i-1}}+|p|] = S[{\pi_{i}}+|p|]\] holds. Therefore, the maximal $|p|$-periodic extension of $S[{\pi_{i}}+1..{\pi_{i}}+m-1]$ does not extends this string to the left. This implies that at most one of the two maximal $|p|$-periodic extensions of the occurrences $S[{\pi_{i-1}}+1..{\pi_{i-1}}+m-1]$ and $S[{\pi_{i}}+1..{\pi_{i}}+m-1]$ does extend the occurrence to the left.
        
        Similarly, at most one of those occurrences is extended to the right by the maximal $|p|$-periodic extension.
        
        Since, by assumption, both occurrences are $|p|$-periodically extendable, exactly one occurrence has to be $|p|$-periodically extendable to the left and exactly one occurrence has to be $|p|$-periodically extendable to the right. By symmetry we can assume without loss of generality that $S[{\pi_{i-1}}+1..{\pi_{i-1}}+m-1]$ is $|p|$-periodically extendable to the left and that $S[{\pi_{i}}+1..{\pi_{i}}+m-1]$ is $|p|$-periodically extendable to the right.
        
        Hence, $S[{\pi_{i-1}}-|p|..{\pi_{i-1}}+m-1]$ and $S[{\pi_{i}}+1..{\pi_{i}}+m+|p|]$ are $|p|$-periodic. Also, by definition of the Burrows-Wheeler transform, the inequality $S[{\pi_{i-1}}+m] < S[{\pi_{i}}+m]$ holds.
        
        Combining the periodicity with this inequality yields \[S[{\pi_{i-1}}+1..{\pi_{i-1}}+m-1]=S[{\pi_{i-1}}+1-|p|..{\pi_{i-1}}+m-1-|p|]\] 
        and 
        \[S[{\pi_{i-1}}+m] < S[{\pi_{i}}+m] = S[{\pi_{i}}+m-|p|] =  S[{\pi_{i-1}}+m-|p|]\]
        which imply
        \[S[{\pi_{i-1}}+1..|S|+1]S[1..\pi_{i-1}] < S[{\pi_{i-1}}+1-|p|..|S|+1]S[1..\pi_{i-1}-|p|]\textup{.}\]
        
        Similarly, we get 
        \[S[{\pi_{i-1}}+1-|p|..{\pi_{i-1}}+m-1]=S[{\pi_{i}}+1..{\pi_{i}}+m-1+|p|]\]
        and 
        \[S[{\pi_{i-1}}+m]< S[{\pi_{i}}+m] = S[{\pi_{i}}+m+|p|]\]
        which imply
        \[S[{\pi_{i-1}}+1-|p|..|S|+1]S[1..\pi_{i-1}-|p|] < S[{\pi_i}+1..|S|+1]S[1..\pi_i]\textup{.}\]
        
        Since $S[{\pi_{i-1}}+1-|p|..|S|+1]S[1..\pi_{i-1}-|p|]$ is lexicographically between the cyclic permutations $S[{\pi_{i-1}}+1..|S|+1]S[1..\pi_{i-1}]$ and $S[{\pi_i}+1..|S|+1]S[1..\pi_i]$, these two strings are not neighbors with regard to the Burrows-Wheeler transform. This contradicts the assumption and thereby concludes the proof.
    \end{proof}
    
    Therefore, the positioned maximal repeats of the associated maximal pairs corresponding to the RLBWT are either not highly-periodic or, if they are highly-periodic, the period cannot be extended by more than a period length. This implies the following corollary and thereby leads to another proof of the Burrows-Wheeler conjecture:
    
    \begin{corollary}
        Let $S$ be a string with $z$ LZ77-factors.
        
        Then, there are at most $73(\log_2 |S|)(z+2)^2$ runs in the RLBWT.
    \end{corollary}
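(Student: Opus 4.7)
The plan is to reduce the run count to a count of substantially different maximal pairs and then combine the bounds from Sections~3 and~4 in two complementary regimes. I would first write the number of runs as $1$ plus the number of transitions, i.e.\ indices $i$ with $S[\pi_{i-1}]\neq S[\pi_i]$. By the discussion at the beginning of this section, each such transition either has $m=0$ (at most $|\Sigma\cup\{\$\}|-1\leq z$ such indices, since every character of $\Sigma$ must be introduced by a fresh-character LZ77-factor) or yields a maximal pair $(\pi_{i-1}+1,\pi_i+1,m-1)$; and that same discussion shows that each such maximal pair, together with its surrounding characters, uniquely pins down the pair of lexicographically neighbored cyclic permutations, so the map from non-trivial transitions to substantially different maximal pairs is injective.

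Next, I would apply Lemma~\ref{lem:non-extendability}, taking $p$ to be the primitive root of the maximal repeat $P$ whenever $P$ is an integer power, to extract the structural information that at least one of the two occurrences of $P$ is not $|p|$-periodically extendable by more than one period length. This lets me split the relevant maximal pairs into two classes. If the corresponding maximal repeat is not at least a sixth power, the bound stated at the end of Section~3 gives at most $41(\log_2|S|)(z+1)(z+2)$ substantially different maximal pairs. If it is at least a sixth power, then it is in particular at least a fourth power, and by Lemma~\ref{lem:non-extendability} at least one of its occurrences is not periodically extendable by more than one period length, so the last corollary of Section~4 bounds their number by $32(\log_2|S|)(z+1)^2$.

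Adding these bounds to the at most $z+1$ trivial contributions (initial run plus $m=0$ transitions) yields
\[
1+z+41(\log_2|S|)(z+1)(z+2)+32(\log_2|S|)(z+1)^2 \leq 73(\log_2|S|)(z+2)^2,
\]
the final inequality being a routine polynomial check using $\log_2|S|\geq 1$. The most delicate point is matching the non-extendability guarantee of Lemma~\ref{lem:non-extendability} to the precise hypothesis of the Section~4 corollary (``at least fourth powers of which at least one is not periodically extendable by more than period length''); once this correspondence is in place, the arithmetic $41+32=73$ essentially closes the argument.
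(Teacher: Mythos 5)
Your proposal is correct and follows essentially the same route the paper intends: map each new run (beyond the $m=0$ and initial cases) injectively to a substantially different maximal pair, use Lemma~\ref{lem:non-extendability} to split these into non-sixth-powers (bounded via the $q=3$ instance of the Section~3 corollary by $41(\log_2|S|)(z+1)(z+2)$) and fourth powers with the non-extendability property (bounded by $32(\log_2|S|)(z+1)^2$), and absorb the additive $\mathcal{O}(z)$ slack into $(z+2)^2$ versus $(z+1)(z+2)$. The only nit is that $p$ in Lemma~\ref{lem:non-extendability} should be taken as the prefix of $P$ of length equal to its minimal period (which works for fractional, not only integer, powers), exactly matching the ``maximal periodic extension'' convention of Section~4.
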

    
    \section{Conclusion}
    
    This paper proved that of the potentially $\mathcal{O}(qz^3)$ substantially different maximal pairs in a string, it is sufficient to understand a subset containing at most $73(\log_2 |S|)(z+2)^2$ maximal pairs.
    
    It seems therefore likely that it is possible to merge the nodes of the CDAWG which correspond maximal repeats of the same base and get a new data structure which is almost as universal and intuitive as the CDAWG but does only has $\mathcal{O}((\log |S|)z^2)$ arcs.
    
    Also, the proofs presented in this paper do not use the underlying structure of the string. If the substrings of $S$ and the reversed string $S_{\operatorname{rev}}$ are also highly compressible and have less than $z'$ LZ77-factors each, it should be possible to prove that the number of runs in the RLBWT is bounded from above by $\mathcal{O}(z' z^2)$.
    
    Thereby, it might be possible to derive an upper bound for the runs in the RLBWT which is only dependent on the number of LZ77-factors.
    
    \section{Acknowledgements}
    
    Fabio Cunial suggested that my previous work might be extendable from counting maximal repeats to counting extensions of maximal repeats. He also pointed out that such a result would be more interesting since it is more closely linked to the size of the compacted directed acyclic word graph. Nicola Prezza noted that my previous work also resulted in a non-trivial upper bound for the number of runs in the run-length Burrows-Wheeler transform and that a more careful investigation of the extensions of maximal repeats might result in a better bound for the Burrows-Wheeler conjecture which was unsolved at that time. I also thank Djamal Belazzougui for notifying me of the ``resolution of the Burrows-Wheeler conjecture'' by Kempa and Kociumaka.
    
    \bibliography{maximalRepeats}

\begin{thebibliography}{1}

\bibitem{Belazzougui:lower_bound}
Djamal Belazzougui, Fabio Cunial, Travis Gagie, Nicola Prezza, and Mathieu
  Raffinot.
\newblock Composite repetition-aware data structures.
\newblock In Ferdinando Cicalese, Ely Porat, and Ugo Vaccaro, editors, {\em
  Combinatorial Pattern Matching - 26th Annual Symposium, {CPM} 2015, Ischia
  Island, Italy, June 29 - July 1, 2015, Proceedings}, volume 9133 of {\em
  Lecture Notes in Computer Science}, pages 26--39. Springer, 2015.
\newblock \href {https://doi.org/10.1007/978-3-319-19929-0\_3}
  {\path{doi:10.1007/978-3-319-19929-0\_3}}.

\bibitem{Blumer:CDAWG}
Anselm Blumer, J.~Blumer, David Haussler, Ross~M. McConnell, and Andrzej
  Ehrenfeucht.
\newblock Complete inverted files for efficient text retrieval and analysis.
\newblock {\em J. {ACM}}, 34(3):578--595, 1987.
\newblock \href {https://doi.org/10.1145/28869.28873}
  {\path{doi:10.1145/28869.28873}}.

\bibitem{BWT}
M.~Burrows and D.~J. Wheeler.
\newblock A block-sorting lossless data compression algorithm.
\newblock Technical report, 1994.

\bibitem{FibBWT}
Manolis Christodoulakis, Costas~S. Iliopoulos, and Yoan Jos{\'e}~Pinz{\'o}n
  Ardila.
\newblock Simple algorithm for sorting the fibonacci string rotations.
\newblock In Ji{\v{r}}{\'i} Wiedermann, Gerard Tel, Jaroslav Pokorn{\'y},
  M{\'a}ria Bielikov{\'a}, and J{\'u}lius {\v{S}}tuller, editors, {\em SOFSEM
  2006: Theory and Practice of Computer Science}, pages 218--225, Berlin,
  Heidelberg, 2006. Springer Berlin Heidelberg.

\bibitem{CrochemoreTSL}
M.~Crochemore and W.~Rytter.
\newblock Squares, cubes, and time-space efficient string searching.
\newblock {\em Algorithmica}, 13(5):405--425, May 1995.
\newblock \href {https://doi.org/10.1007/BF01190846}
  {\path{doi:10.1007/BF01190846}}.

\bibitem{Furuya_MR-RePair}
I.~{Furuya}, T.~{Takagi}, Y.~{Nakashima}, S.~{Inenaga}, H.~{Bannai}, and
  T.~{Kida}.
\newblock Mr-repair: Grammar compression based on maximal repeats.
\newblock In {\em 2019 Data Compression Conference (DCC)}, pages 508--517,
  March 2019.
\newblock \href {https://doi.org/10.1109/DCC.2019.00059}
  {\path{doi:10.1109/DCC.2019.00059}}.

\bibitem{Kempa_BWT_Conjecture}
Dominik Kempa and Tomasz Kociumaka.
\newblock Resolution of the burrows-wheeler transform conjecture.
\newblock {\em CoRR}, abs/1910.10631, 2019.
\newblock URL: \url{http://arxiv.org/abs/1910.10631}.

\bibitem{Pape-Lange}
Julian Pape-Lange.
\newblock {On Maximal Repeats in Compressed Strings}.
\newblock In Nadia Pisanti and Solon~P. Pissis, editors, {\em 30th Annual
  Symposium on Combinatorial Pattern Matching (CPM 2019)}, volume 128 of {\em
  Leibniz International Proceedings in Informatics (LIPIcs)}, pages
  18:1--18:13, Dagstuhl, Germany, 2019. Schloss Dagstuhl--Leibniz-Zentrum fuer
  Informatik.
\newblock URL: \url{http://drops.dagstuhl.de/opus/volltexte/2019/10489}.

\bibitem{Raffinot:maxRepeatsVsCDAWGs}
Mathieu Raffinot.
\newblock On maximal repeats in strings.
\newblock {\em Inf. Process. Lett.}, 80(3):165--169, 2001.
\newblock \href {https://doi.org/10.1016/S0020-0190(01)00152-1}
  {\path{doi:10.1016/S0020-0190(01)00152-1}}.

\end{thebibliography}
    
\end{document}